\documentclass[conference]{IEEEtran}
\usepackage{cite}
\usepackage{amsmath,amssymb,amsfonts}
\usepackage{amsthm}
\usepackage{graphicx}
\usepackage{booktabs}
\usepackage{tikz}
\usetikzlibrary{positioning,arrows.meta,calc,shapes.geometric}
\usepackage{makecell}
\usepackage{newtxtext}
\usepackage{url}
\usepackage[hidelinks]{hyperref}
\usepackage{tabularx}
\usepackage{subcaption}
\usepackage{placeins}
\usepackage{algorithm}
\usepackage{algpseudocode}

\newtheorem{definition}{Definition}
\newtheorem{proposition}{Proposition}

\newtheorem{corollary}{Corollary}

\def\BibTeX{{\rm B\kern-.05em{\sc i\kern-.025em b}\kern-.08em
    T\kern-.1667em\lower.7ex\hbox{E}\kern-.125emX}}

\begin{document}

\title{Quantum Workload Privacy Beyond Data Confidentiality}

\author{
    \IEEEauthorblockN{Shaunak Suresh Pawar, Samuel Punch, and Krishnendu Guha}
    \IEEEauthorblockA{
        School of Computer Science and Information Technology,
        University College Cork, Ireland\\
        Email: {pawar2003.shaunak@gmail.com,samuel.punch, k.guha}@ucc.ie
    }
}

\maketitle
\thispagestyle{plain}\pagestyle{plain}   

\begin{abstract}
Quantum computing increasingly relies on remote execution platforms, requiring users to submit computational workloads to third-party quantum hardware. While existing privacy mechanisms primarily protect quantum states, input data, and computational outputs, they provide limited protection for information encoded indirectly in the structure of a submitted workload. This work investigates a previously underexplored confidentiality channel in which hardware-aware compilation exposes information about the scientific computation being performed. We introduce the concept of \emph{computational-intent leakage}, where compilation and execution artefacts provide observable signatures that can be correlated with hidden characteristics of a scientific workload. The underlying mechanism arises from the interaction between logical circuit structure and the connectivity constraints of quantum processors. Routing, gate decomposition, circuit depth, and SWAP insertion are not independent of the logical workload; instead, they can vary systematically with parameters that define the underlying scientific computation. Consequently, an observer with access to compilation or execution metadata may distinguish workloads that differ in modelling structure, discretisation, or geometry without observing the protected quantum state or final computational result. We formulate this threat using Scientific-Intent Indistinguishability ($\mathrm{SCI\text{-}IND}$) and analyse the conditions under which deterministic hardware-aware compilation produces distinguishable workload signatures. Experimental evaluation on the 156-qubit IBM Heron processor \texttt{ibm\_fez} demonstrates substantial separability between structurally distinct workloads, including PDE discretisations with different cycle structures and molecular workloads with different geometries. We further examine near-isomorphic workloads, showing that routing-scaling behaviour can reveal hidden discretisation characteristics even when direct structural classification becomes difficult. Across the evaluated cases, conventional gate-padding provides no meaningful confidentiality improvement and can instead introduce substantial fidelity degradation. These results demonstrate that protecting quantum data alone does not guarantee workload confidentiality and motivate compilation-aware mechanisms for concealing computational intent in delegated quantum computing.
\end{abstract}

\begin{IEEEkeywords}
quantum computing, delegated computation, workload confidentiality,
computational-intent leakage, side-channel security,
quantum compilation, cloud quantum computing
\end{IEEEkeywords}

\section{Introduction}
\label{sec:intro}

Delegated quantum scientific computing introduces a confidentiality challenge that standard quantum privacy models overlook. In high-value scientific and engineering applications, the primary sensitive asset is not the numerical output but the modelling structure of the workload, the physical regime, algorithmic configuration, and discretisation choices that define a proprietary simulation methodology. We term this Scientific Intellectual Property (SIP). Quantum algorithms for partial differential equations (PDEs) instantiate this risk most sharply, with impact across fluid dynamics, electromagnetics, and quantitative finance~\cite{Childs_2021,Kyriienko_2021}, because the link between PDE modelling choices and circuit topology is unusually direct. The threat is structural: it extends to any delegated workload whose hidden intent shapes operator connectivity, including VQE-based quantum chemistry. In practice, such workloads execute on multi-tenant cloud platforms, transferring computation across a trust boundary to a provider that manages quantum hardware and the full compilation stack.

Existing quantum privacy research protects the \emph{computational payload}: blind quantum computation (BQC)~\cite{Broadbent_2009,fitzsimons2016privatequantumcomputationintroduction} hides what the computation \emph{is}, and timing side-channel work~\cite{lu2024quantumleaktimingsidechannel,dong2025exploitingtimingsidechannelsquantum} addresses when it runs. Neither addresses the confidentiality of the \emph{scientific structure} encoded in the workload itself. We show that SIP is physically coupled to the hardware-compilation artefacts the provider observes as a routine byproduct of managing the compilation stack. As illustrated in Fig.~\ref{fig:system_model}, the delegation workflow exposes a rich observational surface precisely where the trust boundary lies.

\medskip
\noindent\textbf{Contributions.}
\begin{itemize}
    \item \textbf{SCI-IND.} A game-based security notion for scientific intent indistinguishability, with an analytical result (Proposition~\ref{prop:infeasibility}, Corollary~\ref{cor:asymptotic}) showing that passive SCI-IND security is asymptotically unachievable under any routing-optimal compiler on constrained hardware.

    \item \textbf{Physics-to-artefact mapping.} Three deterministic coupling modes (topological, complexity, operational) linking hidden scientific attributes to provider-visible execution signatures, grounded in graph-theoretic embeddability.

    \item \textbf{Algorithm-agnostic scale recovery.} Under a strict family-holdout protocol, a passive observer extracts the spatial discretisation $N$ from a routing-overhead scaling law $\Delta_{\mathrm{routing}} \sim N^k$, with stable exponents across solver families ($k = 1.673$ versus $k = 1.836$). Leakage persists even when the underlying connectivity graphs are near-isomorphic, the regime not covered by Proposition~\ref{prop:infeasibility}, and the recovered signal generalises across algorithmic implementations.

    \item \textbf{Testbed evaluation.} Macro-F1 up to $0.93$ for boundary topology, discretisation scale, hardware stability under calibration drift, and accuracy targets, on a hardware-aware compilation testbed calibrated to near-term quantum constraints.

    \item \textbf{Hardware confirmation on IBM Heron r2.} Maximal-separation baseline runs on \texttt{ibm\_fez} (156 qubits) confirm the predicted $C_8$ versus $C_3$ routing gap on a girth-12 Heavy-Hex lattice (Macro-F1 $= 1.000 \pm 0.000$, $d_z = 7.701$).

    \item \textbf{Cross-domain transfer to VQE chemistry.} The same routing-gap signature reproduces for hidden molecular geometry (linear H$_2$ versus bent H$_2$O, $d_z = 49.173$).

    \item \textbf{Defence analysis.} Passive gate-padding is topologically insufficient: no passive normalisation reduces adversarial advantage without catastrophic fidelity loss.
\end{itemize}

\section{Background and Related Work}
\label{sec:related}

Table~\ref{tab:comparison} positions this work relative to four relevant research areas. Quantum PDE algorithms establish that modelling choices directly govern qubit count and circuit depth~\cite{Harrow_2009,Childs_2021,Montanaro_2016,Costa_2019,Kyriienko_2021,Brearley_2024}. Quantum side-channel studies show that timing and resource traces fingerprint circuits~\cite{lu2024quantumleaktimingsidechannel,dong2025exploitingtimingsidechannelsquantum}, but focus on generic circuit properties rather than scientific modelling intent. BQC protocols~\cite{Broadbent_2009,Gheorghiu_2015} hide circuit semantics from the server; their relationship to SCI-IND is discussed in Section~\ref{sec:sci_ind}. Hardware variability studies~\cite{dasgupta2020characterizingstabilitynisqdevices, tannu2019mitigating} characterise routing and calibration effects as performance artefacts, not confidentiality channels. Multi-programming and adversarial crosstalk work focuses on active user-vs-user threats~\cite{das2019case,saki2020analysis}. No prior work examines the passive, provider-vs-user confidentiality of scientific modelling intent through hardware-compilation artefacts.

\begin{table}
\centering
\renewcommand{\arraystretch}{1.15}
\caption{Positioning relative to prior work.}
\label{tab:comparison}
\setlength{\tabcolsep}{3pt}
\begin{tabular}{@{}lcccc@{}}
\toprule
\textbf{Research Area} &
\makecell{Exec.\\artefacts} &
\makecell{Sci.\\inference} &
\makecell{HW-coupled\\analysis} &
\makecell{PDE\\context} \\
\midrule
Quantum PDE algorithms       & --  & -- & --      & \checkmark \\
Timing/usage side-channels   & \checkmark & -- & partial & -- \\
HW variability studies       & partial & -- & \checkmark & -- \\
Blind / verifiable QC        & limited & -- & --     & -- \\
\midrule
\textbf{This work}           & \checkmark & \checkmark & \checkmark & \checkmark \\
\bottomrule
\end{tabular}
\end{table}

\section{System and Threat Model}
\label{sec:system}

\begin{figure*}[t]
    \centering
    \includegraphics[width=\linewidth]{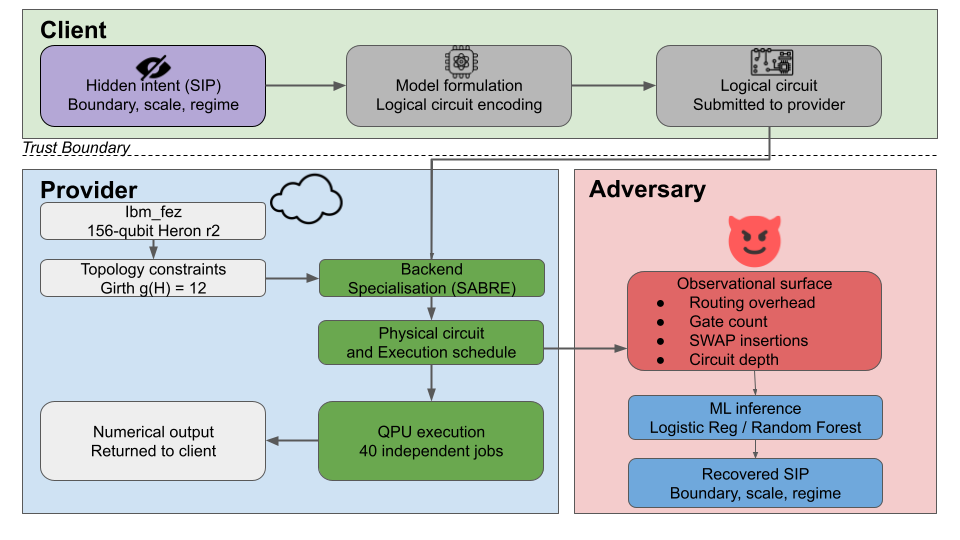} 
    \caption{Delegation pipeline and observational surface. (1) Client encodes hidden modelling intent (boundary topology, grid resolution) into a logical circuit. (2) The circuit is submitted to the cloud provider. (3) Provider's compiler maps the logical circuit onto the fixed hardware topology, inserting routing SWAPs that depend on the hidden connectivity. (4) The provider observes execution artefacts (routing overhead, gate composition, depth) that reveal structural properties of the original scientific problem.}
    \label{fig:system_model}
\end{figure*}

\subsection{Delegated Execution Model}
Fig.~\ref{fig:system_model} illustrates the delegation workflow and the corresponding threat landscape. A \emph{scientific client} defines a PDE instance, encoding boundary topology, grid resolution, and solver regime into a logical circuit, and delegates execution to a \emph{cloud provider} that manages QPU compilation and scheduling. Information divides into three domains: (i)~\textbf{hidden modelling intent (SIP)}: boundary conditions, grid resolution, accuracy target, intended to remain strictly confidential; (ii)~\textbf{numerical output}: the final result, often eventually disclosed; (iii)~\textbf{execution artefacts}: routing overhead, gate composition, circuit depth, and transpilation latency, inherently visible to the provider as byproducts of managing the compilation stack.

\subsection{Adversary: The Conditioned Scientific Provider}
\label{subsec:adversary}

As depicted in the untrusted cloud zone of Fig.~\ref{fig:system_model}, the adversary is an honest-but-curious cloud provider operating a Scientific-Computing-as-a-Service (SCaaS) platform. Because the provider supplies PDE-solving API templates, it possesses \emph{domain conditioning}: knowledge of the solver family and baseline execution footprints against which deviations in compiled artefacts are detectable. The adversary's visibility is restricted to the information inherently exposed by delegating the compilation stack: physical qubit mapping, routed circuit depth, gate-mixture statistics, and transpilation latency. The adversary is purely \emph{passive}, no circuit modification, noise injection, or active probing, and relies entirely on the physical-to-compilation coupling: changes in the physics $\mathcal{P}$ force measurable changes in backend specialisation, which map telemetry back to scientific intent. This is the weakest plausible adversary; stronger models constitute strictly greater threats not evaluated here.

f
The leakage arises because the compiler must route logical qubits onto a fixed hardware connectivity graph. A PDE solver with periodic boundary conditions creates a ring of interactions that cannot be natively embedded on a chip whose shortest cycle is large (e.g., Heavy-Hex girth 12). The compiler compensates by inserting SWAP gates, which increase circuit depth and gate counts. The amount of extra routing depends directly on the hidden boundary choice, and this routing cost is visible to the provider as an execution artefact. In essence, the hardware-constrained compilation process \emph{imprints} the problem's connectivity structure onto measurable, provider-visible metrics.

This principle is not confined to PDE workloads; any delegated quantum program whose hidden parameters modify logical connectivity is vulnerable. PDE and VQE are concrete instances where the mapping from physical structure to connectivity is especially direct, making leakage highly interpretable. The phenomenon resembles classical side-channels (e.g., cache timing), but with a critical difference: classical platforms have mature mitigations (ORAM, constant-time compilation) with well-understood cost tradeoffs. Our defence evaluation (Section~\ref{sec:defence}) shows that gate-padding, the direct quantum analogue of constant-time compilation, fails because the leak is topological rather than merely volumetric.

\section{Scientific-Intent Indistinguishability (SCI-IND)}
\label{sec:sci_ind}

\subsection{Workload Model}
A delegated workload is a tuple $W = (x, s)$ where $x \in \mathcal{X}$ is the \emph{public profile} (solver family, nominal interaction budget, hardware target) and $s \in \mathcal{S}$ is the \emph{hidden scientific intent} (SIP). The compilation function maps workload, hardware description $H \in \mathcal{H}$, and randomness $\rho \in \mathcal{R}$ to a physical circuit and schedule:
\[
    (C_{\mathrm{phys}}, \sigma) \;\leftarrow\; \mathsf{Comp}(W, H, \rho)
\]
The provider observes only the execution transcript $\tau \;\leftarrow\; \mathsf{Trace}(W, H, \rho) = \pi(\mathsf{Comp}(W, H, \rho))$, where $\pi$ extracts routed depth, two-qubit overhead, swap fraction, CX fraction, depth overhead ratio, and transpilation latency.

\subsection{Relation to Existing Notions}
BQC hides what the computation \emph{is}; SCI-IND addresses whether the \emph{operational signature} reveals hidden scientific structure. A scheme achieving BQC circuit privacy may still fail SCI-IND if its compilation footprint varies systematically with the client's modelling choices. Conversely, SCI-IND does not imply BQC circuit privacy. The two notions are orthogonal.

\subsection{The SCI-IND Security Game}
\begin{enumerate}
    \item Challenger samples $H \leftarrow \mathcal{H}$ and makes $H$ public.
    \item Adversary outputs $W_0 = (x, s_0)$, $W_1 = (x, s_1)$ with identical public profile $x$ but $s_0 \neq s_1$.
    \item Challenger samples $b \in_R \{0,1\}$, $\rho \leftarrow \mathcal{R}$, returns $\tau \leftarrow \mathsf{Trace}(W_b, H, \rho)$.
    \item Adversary outputs guess $b'$.
\end{enumerate}

\begin{definition}[SCI-IND Security]
Scheme $\Pi = (\mathsf{Comp}, \mathsf{Trace})$ is \emph{SCI-IND secure} if for all PPT adversaries $\mathcal{A}$:
$\mathrm{Adv}^{\mathrm{SCI\text{-}IND}}_{\mathcal{A}}(\lambda) = \left|\Pr[b' = b] - \tfrac{1}{2}\right| \leq \mathsf{negl}(\lambda)$.
\end{definition}

\subsection{Analytical Limits}
\label{subsec:hardness}

\begin{proposition}[Baseline Observational Lemma]
\label{prop:infeasibility}
Let $\mathsf{Comp}$ be a hardware-aware compiler that deterministically minimises routing overhead on fixed hardware $H$. For any $W_0, W_1$ whose hidden intents induce logically non-isomorphic connectivity graphs $G_0, G_1$ with an edge in their symmetric difference whose realisation cost differs under the routing-optimal embedding into $H$, and assuming compilation randomness $\rho$ induces normally distributed routing overhead, there exists an efficient distinguisher $\mathcal{A}$ with advantage
\begin{equation}
    \mathrm{Adv}^{\mathrm{SCI\text{-}IND}}_{\mathcal{A}} = \frac{1}{2}\mathrm{erf}\!\left(\frac{\Delta_{\mathrm{routing}}}{2\sqrt{2}\,\sigma_\rho}\right),
\end{equation}
where $\Delta_{\mathrm{routing}}$ is the expected difference in routing overhead and $\sigma_\rho$ the standard deviation induced by $\rho$.
\end{proposition}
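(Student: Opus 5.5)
The plan is to reduce the SCI-IND game to a binary Gaussian hypothesis test on a single scalar coordinate of the transcript, and then compute the success probability of the midpoint threshold test.

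\textbf{Step 1: the two workloads induce different mean overheads.} Let $\mu_b = \mathbb{E}_\rho[\pi_{\mathrm{rout}}(\mathsf{Comp}(W_b,H,\rho))]$ be the expected routing overhead, one of the coordinates extracted by $\pi$. Because $\mathsf{Comp}$ is routing-optimal and deterministic up to $\rho$, the overhead of $W_b$ is fixed by the cost of embedding $G_b$ into $H$. The hypothesis supplies an edge in $G_0 \triangle G_1$ whose realisation cost differs under the optimal embedding. I would use this to argue that $\mu_0 \neq \mu_1$, and set $\Delta_{\mathrm{routing}} = |\mu_1-\mu_0|$.

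This is where I expect the main obstacle. Other edges could in principle compensate, so that the total costs coincide. I would handle this by reading the hypothesis as asserting a nonzero net difference in expected overhead, which is all the statement actually uses. Without loss of generality, take $\mu_1 > \mu_0$.

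\textbf{Step 2: the distinguisher.} By the normality assumption, the observed overhead is $r \sim \mathcal{N}(\mu_b, \sigma_\rho^2)$. I take the variance to be equal under both hypotheses, which is implicit in the use of a single $\sigma_\rho$. Since $H$ is public, $\mathcal{A}$ can compute $\mu_0$ and $\mu_1$ efficiently, either by running $\mathsf{Comp}$ on $W_0$ and $W_1$ itself or by sampling. It then outputs $b'=1$ exactly when $r > (\mu_0+\mu_1)/2$. This runs in polynomial time, given that $\mathsf{Comp}$ and $\pi$ are efficient.

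\textbf{Step 3: the advantage.} By symmetry of the midpoint threshold,
\[
\Pr[b'=b] = \tfrac12\Pr[r\le \tfrac{\mu_0+\mu_1}{2}\mid b=0] + \tfrac12\Pr[r> \tfrac{\mu_0+\mu_1}{2}\mid b=1] = \Phi\!\left(\frac{\Delta_{\mathrm{routing}}}{2\sigma_\rho}\right).
\]
Using $\Phi(z) = \tfrac12 + \tfrac12\mathrm{erf}(z/\sqrt2)$, this gives
\[
\Pr[b'=b]-\tfrac12 = \tfrac12\mathrm{erf}\!\left(\frac{\Delta_{\mathrm{routing}}}{2\sqrt2\,\sigma_\rho}\right),
\]
which is the claimed advantage. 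The expression is positive whenever $\Delta_{\mathrm{routing}}>0$.

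\textbf{Step 4 (optional remark).} By the Neyman--Pearson lemma, the likelihood ratio for equal-variance Gaussians is monotone in $r$, so this threshold test is optimal among distinguishers that use only $r$. The proposition asserts existence only, so optimality is not required for the proof.
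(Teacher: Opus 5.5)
Your proposal is correct and follows the same route as the paper's proof sketch. Both fix the two means under routing-optimal compilation, threshold the observed overhead at $(\mu_0+\mu_1)/2$, and convert $\Phi(\Delta_{\mathrm{routing}}/(2\sigma_\rho))$ into the stated erf expression; your Step~1 caveat that the differing edge cost must actually yield a net $\mu_0\neq\mu_1$ is exactly what the paper takes as given. One small refinement: for the advantage to hold as an exact equality, have $\mathcal{A}$ hardwire $\mu_0,\mu_1$ (it chose $W_0,W_1$, and the paper's $O(1)$ claim implicitly assumes this) rather than estimate them by sampling, which gives the formula only approximately.
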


\begin{proof}[Proof sketch]
Under deterministic routing minimization, $\mu_i = \mathbb{E}_\rho[\tau_{\mathrm{routing}}\mid W_i]$ are fixed and $|\mu_0 - \mu_1| = \Delta_{\mathrm{routing}} > 0$. With Gaussian routing noise, the likelihood ratio test reduces to thresholding $\tau_{\mathrm{routing}}$ at $\theta = (\mu_0+\mu_1)/2$, yielding the claimed error-function advantage. The distinguisher runs in $O(1)$.
\end{proof}

\begin{corollary}[Asymptotic Infeasibility]
\label{cor:asymptotic}
Whenever $\Delta_{\mathrm{routing}} / \sigma_\rho \geq c > 0$, $\mathrm{Adv}^{\mathrm{SCI\text{-}IND}}_{\mathcal{A}} \geq \frac{1}{2}\mathrm{erf}\bigl(\frac{c}{2\sqrt{2}}\bigr) > 0$. Achieving $\mathrm{Adv} \to 0$ requires $\sigma_\rho / \Delta_{\mathrm{routing}} \to \infty$, i.e., injecting noise that destroys computational utility.
\end{corollary}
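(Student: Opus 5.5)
The corollary follows from Proposition~\ref{prop:infeasibility} plus monotonicity of the error function. The proposition gives a concrete efficient distinguisher $\mathcal{A}$: threshold the observed routing overhead at $\theta=(\mu_0+\mu_1)/2$. Its advantage is exactly $\tfrac12\mathrm{erf}\bigl(\Delta_{\mathrm{routing}}/(2\sqrt2\,\sigma_\rho)\bigr)$. The plan is to bound this closed form from below when the signal-to-noise ratio stays bounded away from zero, and then read off what is needed to drive it to zero.

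\textbf{Step 1 (lower bound).} I will use that $\mathrm{erf}$ is strictly increasing on $\mathbb{R}$, since its derivative $\tfrac{2}{\sqrt\pi}e^{-t^2}$ is positive, and that $\mathrm{erf}(0)=0$. The hypothesis $\Delta_{\mathrm{routing}}/\sigma_\rho\ge c>0$ gives
\[
\frac{\Delta_{\mathrm{routing}}}{2\sqrt2\,\sigma_\rho}\ \ge\ \frac{c}{2\sqrt2}\ >\ 0 .
\]
Monotonicity then yields $\mathrm{Adv}\ge\tfrac12\mathrm{erf}(c/(2\sqrt2))>0$. Because SCI-IND advantage is a supremum over all PPT adversaries, exhibiting this single thresholding adversary is enough. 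The bound does not depend on $\lambda$. So if the ratio stays at least $c$ along the whole parameter family (for instance, as $N$ grows), the advantage is a positive constant. A positive constant is not negligible, so SCI-IND security fails.

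\textbf{Step 2 (necessity of diverging noise).} Write $r=\Delta_{\mathrm{routing}}/\sigma_\rho$. Continuity of $\mathrm{erf}$ at $0$ together with strict monotonicity gives $\tfrac12\mathrm{erf}(r/(2\sqrt2))\to0$ if and only if $r\to0$. This is equivalent to $\sigma_\rho/\Delta_{\mathrm{routing}}\to\infty$. In words, the routing overhead of $W_0$ and $W_1$ can only become indistinguishable, even to this one simple adversary, if the randomness-induced spread outgrows the structural gap without bound.

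\textbf{Step 3 (utility interpretation).} The last clause, that such noise ``destroys computational utility,'' is a modelling statement rather than a mathematical one, and I expect it to be the main obstacle to stating the corollary rigorously. My approach is to make it precise through the link between routing overhead and gate count. The variance $\sigma_\rho^2$ comes from the compilation randomness $\rho$, which varies the number of inserted SWAPs. So a spread $\sigma_\rho\gg\Delta_{\mathrm{routing}}$ forces many executions to carry routing overhead of order $\mu_i+\Omega(\sigma_\rho)$ above the routing-optimal cost. Under a per-gate error model, fidelity decays exponentially in two-qubit gate count. Combining the two, expected fidelity degrades by a factor like $e^{-\epsilon\,\Omega(\sigma_\rho)}$, which tends to zero as $\sigma_\rho$ diverges whenever $\Delta_{\mathrm{routing}}$ grows, as it does in the scaling law $\Delta_{\mathrm{routing}}\sim N^k$. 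I will state this step as conditional on that gate-error model, since the corollary itself gives no formal notion of utility.
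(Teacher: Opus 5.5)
Your Steps 1 and 2 are correct and match what the paper relies on. The paper gives no separate proof of the corollary; it is read off the closed form in Proposition~\ref{prop:infeasibility} by monotonicity of $\mathrm{erf}$. The \emph{requires} direction follows, as you say, because the thresholding adversary's advantage lower-bounds that of any adversary. Step 3 goes beyond the paper, which only asserts the utility clause, and as sketched it has a gap.

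\textbf{The gap in Step 3.} The inference from \emph{many executions carry $\Omega(\sigma_\rho)$ extra overhead} to vanishing expected fidelity does not go through.
\begin{enumerate}
\item If only a constant fraction of runs is penalised (the upper tail of a symmetric spread), expected fidelity drops by a constant factor, not to zero.
\item If you take the proposition's model literally, with $\mu_i$ fixed and overhead $X\sim\mathcal{N}(\mu_i,\sigma_\rho^2)$, then $\mathbb{E}[e^{-\epsilon X}]=e^{-\epsilon\mu_i+\epsilon^2\sigma_\rho^2/2}$ actually grows with $\sigma_\rho$.
\item Even with fidelity capped at its noiseless value, the expectation tends to half that value rather than zero. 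This is because asymptotically half the Gaussian mass lies below the routing-optimal cost $m_i$.
\end{enumerate}

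\textbf{The fix.} The missing ingredient is one-sidedness: routed overhead can never fall below $m_i$. A Gaussian-shaped spread of width $\sigma_\rho$ therefore forces $\mu_i-m_i=\Omega(\sigma_\rho)$, so the fixed-$\mu_i$ picture cannot persist as $\sigma_\rho$ grows. Then, for every fixed $K$, the fraction of runs paying fewer than $K$ extra two-qubit gates vanishes, and expected fidelity tends to $0$. The specific rate $e^{-\epsilon\,\Omega(\sigma_\rho)}$ you wrote would need further assumptions on how much mass sits near $m_i$.

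\textbf{A minor point.} Your qualifier \emph{whenever $\Delta_{\mathrm{routing}}$ grows} is unnecessary. Under the proposition's hypotheses, $\Delta_{\mathrm{routing}}>0$ is fixed by $(G_0,G_1,H)$, so $\sigma_\rho/\Delta_{\mathrm{routing}}\to\infty$ already means $\sigma_\rho\to\infty$.
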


Proposition~\ref{prop:infeasibility} covers topologically distinct graphs. Near-isomorphic graphs arising from continuous parameter variations (e.g., $N=14$ vs.\ $N=16$) are not analytically bounded by this result; our empirical evaluation (Phase~2) shows that a scaling law $\Delta_{\mathrm{routing}} \sim N^k$ still enables scale recovery, confirming the threat extends beyond strict non-isomorphism.

\section{Physics-to-Artefact Coupling}
\label{sec:coupling}

The mapping $\mathcal{M}: \mathcal{P} \xrightarrow{f,\mathcal{H}} \mathcal{A}$ operates through three deterministic modes.
\textbf{Topological coupling.} A periodic boundary introduces a wrap-around cycle of length $n$ that cannot be natively embedded on hardware with large girth, forcing a measurable increase in SWAPs, two-qubit burden, and CX overhead. Dirichlet boundaries, by contrast, induce only local edges and require fewer SWAPs. The difference constitutes a hardware-coupled routing gap $\Delta_{\mathrm{routing}} > 0$.
\textbf{Complexity coupling.} Grid resolution $N$ governs volumetric scaling; depth $\propto N \cdot n_{\mathrm{steps}}$ and gate volume $\propto N^\alpha$ create stable, resolution-dependent signatures in routed depth and gate count.
\textbf{Operational coupling.} Accuracy target $\varepsilon$ determines Trotter steps $n_{\mathrm{steps}} \propto (1/\varepsilon)^{1/2}$, directly inflating depth, while physical regime (diffusion-to-advection ratio) influences transpilation latency. Because these couplings are rooted in the functional requirements of the physics, the resulting artefacts form a stable fingerprint.

\section{Testbed Evaluation}
\label{sec:evaluation}

We exhaustively map the leakage surface across four dimensions (boundary topology, discretisation scale, stability under drift, accuracy requirements) using a calibrated testbed that employs Qiskit SABRE routing with randomised seeds, $\{\mathrm{rz, sx, x, cx}\}$ basis at optimisation level~1. All tests use \emph{matched workload pairs} (identical logical interaction budget, differing only in the hidden attribute). Leakage is assessed via mutual information (MI) against pair-preserving permutation-null baselines, paired Cohen's $d_z$, and grouped held-out classification.

\subsection{Phases 1--4: Boundary, Scale, Stability, Accuracy}
\textbf{Phase 1 (Boundary topology).} Changing only the hidden boundary regime (periodic vs.\ Dirichlet) induces significant leakage. On the \texttt{ladder\_2x4} topology, all eight compilation features exceed the null 95\% band ($p_{\mathrm{Holm}} < 0.016$); \texttt{swap\_equiv} yields 0.624 bits MI. Random Forest achieves Macro-F1 $= 0.925 \pm 0.021$. Leakage strength depends on the \emph{interaction} between hardware topology and workload topology, not simply hardware constraint.

\textbf{Phase 2 (Discretisation scale).} Grid resolution $N \in \{4,6,8,10,12,16\}$ is recovered through a routing scaling exponent $k$ (\texttt{extra\_twoq} $\sim N^k$) that transfers across solver families (Family A $k = 1.673$, Family B $k = 1.836$). A Routed Attacker trained on Family A achieves 96.3\% adjacent accuracy on held-out Family B, confirming algorithm-agnostic leakage. On the \texttt{gridish} topology where the local stencil maps natively, leakage collapses (MI $= 0.011$ bits), proving the signal arises from physical mismatch, not logical volume.

\textbf{Phase 3 (Hardware stability).} Under simulated calibration drift $d \in [0.0,1.0]$, boundary inference on \texttt{gridish} maintains F1 $= 0.98$ across all drift levels; on the more sensitive \texttt{line} topology, F1 degrades gracefully from 0.97 to 0.64 while adjacent accuracy for scale remains $\geq 82\%$.

\textbf{Phase 4 (Accuracy targets).} High-accuracy ($\varepsilon = 10^{-4}$) versus low-accuracy ($\varepsilon = 10^{-2}$) workloads produce routed depth $\approx 800$ vs.\ $\approx 100$ ($d_z = 1.296$), a volumetric signal that persists across constrained and relaxed hardware. Cross-family evaluation (time-evolution to optimisation) yields Macro-F1 $> 0.90$, demonstrating structural compilation features rather than solver-specific templates.

\subsection{Phase 5: VQE Molecular Structure}
We instantiate the SCI-IND game with 4-qubit VQE ansätze: linear H$_2$ (path graph) versus bent H$_2$O (star/triangle connectivity). Both share 4 qubits, VQE solver family, and 9 logical CX gates. On a Heavy-Hex 8-qubit coupling map ($n=40$ matched pairs), all six extracted features are significant ($p < 0.001$). The bent geometry forces 43.33 routed depth vs.\ 20.00 for linear ($d_z = 49.173$) and $+4.00$ extra CX gates. Logistic Regression achieves Macro-F1 $= 1.000 \pm 0.000$, maximum empirical SCI-IND advantage. Hidden molecular geometry, a core SIP asset in drug discovery and materials science, leaks through the same topological mechanism.

\begin{table}[t]
\centering
\caption{Testbed evaluation summary. All phases produce Macro-F1 significantly above chance.}
\label{tab:Testbed_summary}
\setlength{\tabcolsep}{4pt}
\resizebox{\columnwidth}{!}{%
\begin{tabular}{llcc}
\toprule
Phase & Target attribute & LR F1 & RF F1 \\
\midrule
1 & Boundary topology  & $0.879 \pm 0.023$ & $0.925 \pm 0.021$ \\
2 & Grid resolution    & $0.484$ (cross-family) & $0.963$ adj.acc \\
3 & Stability (drift)  & \multicolumn{2}{c}{F1 $\geq 0.64$ at max drift} \\
4 & Accuracy target & 0.943 $\pm$ 0.023 & 0.935 $\pm$ 0.021 \\
5 & Molecular geometry & $1.000 \pm 0.000$ & $1.000 \pm 0.000$ \\
\bottomrule
\end{tabular}%
}
\end{table}

\section{Real-Hardware Validation on IBM Heavy-Hex}
\label{sec:hardware_validation}

\subsection{Experimental Configuration}
\textbf{Target device.} \texttt{ibm\_fez}: 156-qubit IBM Heron~r2, Heavy-Hex connectivity, girth $g(H) = 12$, native basis $\{\mathrm{rz, sx, x, cz}\}$.

\textbf{Topological basis.} Workload pairs on $n = 8$ logical qubits enforce the intra-family matching constraint under two boundary conditions: \textbf{Periodic ($C_8$)}: global wrap-around edge forms an 8-qubit ring; \textbf{Dirichlet ($C_3$)}: wrap-around replaced by a next-nearest-neighbour cross-link forming a local triangle. The Heavy-Hex lattice has girth 12, containing no cycle shorter than 12 qubits, so neither $C_8$ nor $C_3$ can be natively embedded without SWAP insertion. The $C_8$ ring requires a strictly longer minimum routing path, establishing a deterministic routing gap $\Delta_{\mathrm{routing}} > 0$ as predicted by Proposition~\ref{prop:infeasibility}. Fig.~\ref{fig:topology} illustrates the mechanism.

\begin{figure}[t]
    \centering
    \resizebox{\columnwidth}{!}{%
    \begin{tikzpicture}[
        node distance=0.55cm and 0.55cm,
        qnode/.style={circle, draw, fill=blue!15, minimum size=0.42cm, inner sep=0pt, font=\scriptsize},
        hnode/.style={circle, draw, fill=orange!20, minimum size=0.38cm, inner sep=0pt, font=\tiny},
        swap/.style={->, thick, red, dashed},
        edge/.style={thick, blue!60},
        hedge/.style={thick, orange!70}
    ]
    \node[font=\scriptsize\bfseries] at (0, 1.55) {Periodic: $C_8$ ring};
    \foreach \i in {0,...,7} {
        \node[qnode] (p\i) at ({360/8 * \i + 90}:1.1cm) {\i};
    }
    \foreach \i [evaluate=\i as \j using {int(mod(\i+1,8))}] in {0,...,7} {
        \draw[edge] (p\i) -- (p\j);
    }
    \node[font=\tiny, red] at (0, -1.45) {wrap-around $\Rightarrow$ long SWAP path};
    \begin{scope}[xshift=3.6cm]
    \node[font=\scriptsize\bfseries] at (0, 1.55) {Dirichlet: $C_3$ cross-link};
    \foreach \i in {0,...,7} {
        \node[qnode] (d\i) at ({-1.75 + \i*0.5}, 0) {\i};
    }
    \foreach \i [evaluate=\i as \j using {int(\i+1)}] in {0,...,6} {
        \draw[edge] (d\i) -- (d\j);
    }
    \draw[edge, green!60!black, very thick] (d2) to[bend left=35] (d4);
    \node[font=\tiny, green!50!black] at (0, -0.75) {local NNN cross-link};
    \node[font=\tiny, blue!70] at (0, -1.45) {short SWAP path};
    \end{scope}
    \node[font=\tiny, align=center] at (1.8, -2.15)
        {Heavy-Hex girth $= 12$: neither $C_8$ nor $C_3$ embeds natively\\
         $\Rightarrow$ compiler forced to insert SWAPs, but $\Delta_{\mathrm{routing}} > 0$};
    \end{tikzpicture}%
    }
    \caption{Leakage basis on IBM Heavy-Hex ($g(H)=12$).}
    \label{fig:topology}
\end{figure}

\textbf{Protocol.} Stage~1: $N_{\mathrm{pairs}} = 120$ matched pairs transpiled against the \texttt{ibm\_fez} coupling map, 5 seeds, optimisation level~3 (1200 circuits). Stage~2: $N_{\mathrm{hw}} = 20$ held-out pairs executed as 40 independent QPU jobs (256 shots each, \texttt{SamplerV2}).

\subsection{Information-Theoretic Leakage}
Table~\ref{tab:hardware_mi} reports MI, effect size, and Wilcoxon statistics from Stage~1 ($N_{\mathrm{perm}} = 500$). All nine features exceed the null 95\% threshold ($p < 0.0001$). Seven features saturate the binary MI ceiling (0.695 bits), indicating near-perfect information transfer. Ratio-based features show extreme $d_z$ values (up to 39.33).

\begin{table}[ht]
\centering
\caption{Topological fingerprinting on \texttt{ibm\_fez} coupling map (Stage~1, $n=120$ pairs). $\Delta$ = Periodic $-$ Dirichlet. All nine features $p < 0.0001$. $\dagger$: significance driven by distributional shape.}
\label{tab:hardware_mi}
\setlength{\tabcolsep}{4pt}
\footnotesize
\begin{tabular}{lcccr}
\toprule
Feature & MI (bits) & $\Delta$ & $d_z$ & $p$ \\
\midrule
\texttt{swap\_fraction}  & 0.695 & $+0.08$  & $+39.33$ & $<$0.0001 \\
\texttt{cx\_fraction}    & 0.695 & $+0.14$  & $+22.13$ & $<$0.0001 \\
\texttt{twoq\_overhead}  & 0.695 & $+1.12$  & $+14.45$ & $<$0.0001 \\
\texttt{extra\_depth}    & 0.695 & $+38.94$ & $+3.69$  & $<$0.0001 \\
\texttt{depth\_overhead} & 0.695 & $+38.94$ & $+3.69$  & $<$0.0001 \\
\texttt{extra\_twoq}     & 0.695 & $+42.05$ & $+3.40$  & $<$0.0001 \\
\texttt{swap\_equiv}     & 0.695 & $+14.02$ & $+3.40$  & $<$0.0001 \\
\texttt{routed\_depth}   & 0.513 & $+48.22$ & $+3.79$  & $<$0.0001 \\
\texttt{transpile\_ms}   & 0.119 & $-0.10$  & $-0.01$  & $<$0.0001$^\dagger$ \\
\bottomrule
\end{tabular}
\end{table}

\subsection{Classification and Real QPU Execution}
Both Logistic Regression and Random Forest achieve Macro-F1 $= 1.000 \pm 0.000$ on Stage~1 (grouped 5-fold CV), the maximum empirical SCI-IND advantage $\widehat{\mathrm{Adv}} = 0.500$. Table~\ref{tab:hw_qpu} reports artefacts from the 40 real QPU jobs (Stage~2). All nine features remain significant. The empirical routing gap is $\Delta_{\mathrm{routing}} = 32.85$ \texttt{extra\_twoq} gates ($d_z = 4.981$). The headline \texttt{twoq\_overhead} shows $d_z = 7.701$, reflecting near-deterministic girth-forced routing. A provider observing only post-compilation two-qubit gate ratios can distinguish hidden boundary intent with near certainty.

\begin{table*}[ht]
\centering
\caption{Real QPU artefacts on \texttt{ibm\_fez} (Stage~2, $n=20$ matched pairs, 40 QPU jobs).}
\label{tab:hw_qpu}
\setlength{\tabcolsep}{3.5pt}
\footnotesize
\begin{tabular}{lrrrrl}
\toprule
Feature & D mean & P mean & $\Delta$ & $d_z$ & $p$ \\
\midrule
\texttt{twoq\_overhead} & 1.624 & 2.519 & $+0.895$ & $+7.701$ & $<$0.0001 \\
\texttt{swap\_fraction} & 0.127 & 0.201 & $+0.074$ & $+5.224$ & $<$0.0001 \\
\texttt{swap\_equiv}    & 8.000 & 18.95 & $+10.95$ & $+4.981$ & $<$0.0001 \\
\texttt{extra\_twoq}    & 24.00 & 56.85 & $+32.85$ & $+4.981$ & $<$0.0001 \\
\texttt{routed\_depth}  & 145.3 & 247.7 & $+102.4$ & $+3.425$ & $<$0.0001 \\
\texttt{depth\_overhead}& 72.80 & 165.8 & $+93.00$ & $+3.389$ & $<$0.0001 \\
\texttt{extra\_depth}   & 72.80 & 165.8 & $+93.00$ & $+3.389$ & $<$0.0001 \\
\texttt{cx\_fraction}   & 0.209 & 0.247 & $+0.037$ & $+2.924$ & $<$0.0001 \\
\texttt{transpile\_ms}  & 11.51 & 12.50 & $+0.981$ & $+0.571$ & 0.0073   \\
\bottomrule
\end{tabular}
\end{table*}

\section{Artefact-Aware Defence Evaluation}
\label{sec:defence}

We evaluate three gate-padding strategies on \texttt{ibm\_fez} hardware parameters (CZ error 0.003, readout error 0.010):
\textbf{D1}~, two-qubit overhead normalisation (inject $32.85$ dummy CZ pairs to close $\Delta_{\mathrm{routing}}$);
\textbf{D2}~, depth normalisation (equalise \texttt{routed\_depth});
\textbf{D3}~, combined D1 and D2.

SCaaS platforms re-transpile submitted circuits against live device calibration data at execution time, overwriting any client-side routing. The client cannot control the final routing pass; provider-side artefacts are unavoidable. Moreover, classical cloud analogues (SLAs, confidential-computing enclaves) rely on a trusted execution environment that can enforce constant-time compilation. In quantum platforms, routing is performed against live calibration data by the provider, so guaranteeing the absence of topological leakage requires a new technical primitive; legal or contractual measures alone cannot close the channel.

Normalising a single artefact channel achieves zero privacy gain (Macro-F1 $= 1.000$) while fidelity degrades by up to $13.7\%$. The leakage is distributed across a correlated family of channels. D3 at full budget reduces Dirichlet fidelity from $0.666$ to $0.472$ (a 29.2\% absolute degradation) while adversary Macro-F1 remains $1.000$. The routing gap has a \emph{quantitative} component (raw gate-count difference, closeable with dummy gates) and a \emph{structural} component (the SWAP pattern dictated by girth constraints). $C_8$ forces multi-hop resolution; $C_3$ is resolved locally. Adding dummy gates equalises counts but leaves the structural pattern intact, and the classifier exploits ratio features (\texttt{swap\_fraction}, \texttt{twoq\_overhead}) that retain this topological signal. Fig.~\ref{fig:defence_frontier} shows the privacy-fidelity frontier. All strategies trace a horizontal line at Macro-F1 $= 1.000$ as fidelity drops below $0.50$. There is no privacy-fidelity tradeoff in the conventional sense; the undefended circuit is the Pareto-optimal operating point.Gate-count padding is insufficient because the primary leakage signal is topological. Effective SCI-IND-secure execution requires either a lower-girth device that natively embeds both circuit families, or a provider-side topology-blind routing API that applies constant-time compilation regardless of input connectivity. Compilation latency (\texttt{transpile\_ms}) can be masked by uniform scheduling delays, providing a minimum viable baseline for that channel.
\begin{figure}[t]
    \centering
    \includegraphics[width=\columnwidth]{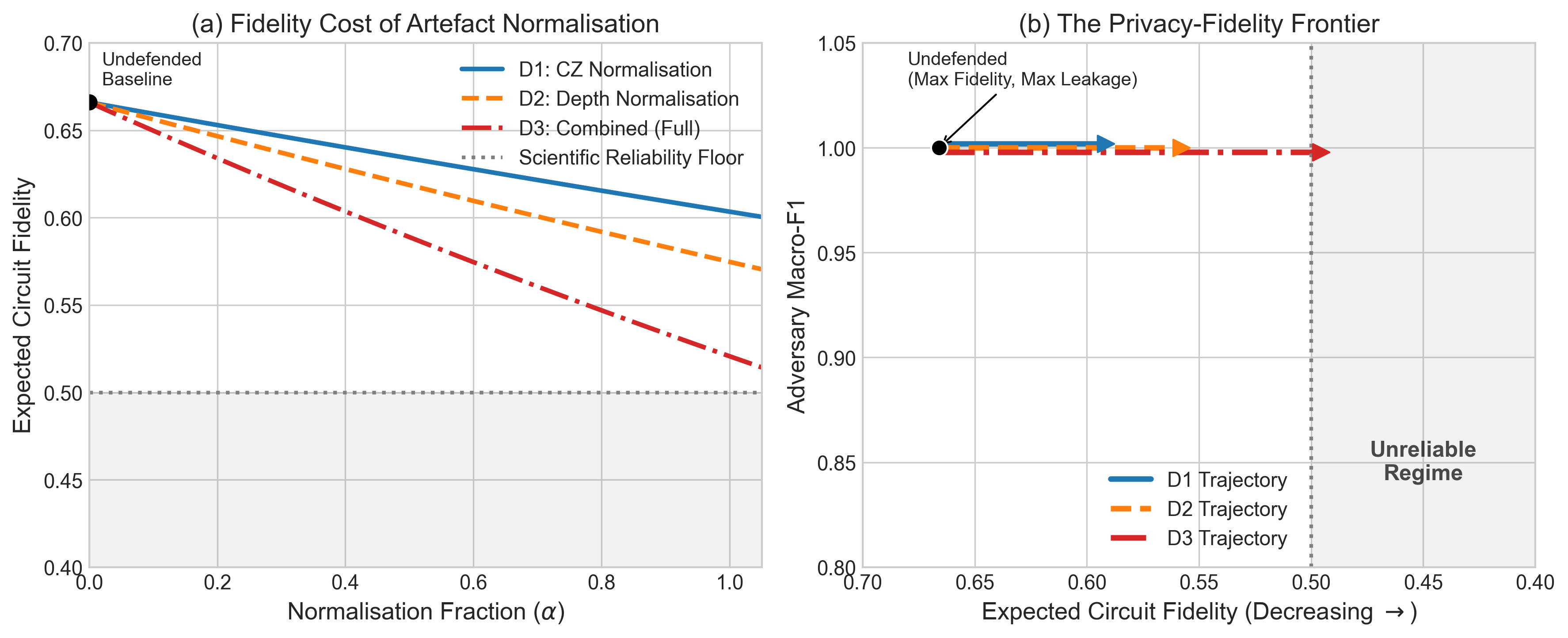}
    \caption{Privacy-fidelity frontier for D1, D2, D3 on \texttt{ibm\_fez}.}
    \label{fig:defence_frontier}
\end{figure}

\section{Conclusion}
\label{sec:conclusion}
We have introduced Scientific-Intent Indistinguishability (SCI-IND), proved that passive SCI-IND security is asymptotically unachievable under routing-optimal compilation on constrained hardware, and validated the resulting leakage on a 156-qubit IBM Heron~r2 QPU ($d_z = 7.701$, Macro-F1 $= 1.000$). These results establish a new confidentiality challenge: protecting quantum states is insufficient when compilation artefacts encode hidden scientific structure. Execution-level confidentiality must become a first-class design requirement, with topology-blind routing APIs, artefact normalisation, and schedule-level timing masking as minimum baselines.

The threat extends to any delegated workload whose hidden intent induces non-isomorphic operator connectivity. Importantly, the present study is limited to fixed-connectivity superconducting hardware; architectures with native reconfigurable or all-to-all connectivity (e.g., neutral atoms with shuttling) may be structurally immune, a direction that warrants future investigation. Formalising execution-trace privacy guarantees across these domains and developing principled low-overhead defences remain open problems.

\bibliographystyle{IEEEtran}
\bibliography{references}

@article{Montanaro_2016,
   title={Quantum algorithms and the finite element method},
   volume={93},
   ISSN={2469-9934},
   url={http://dx.doi.org/10.1103/PhysRevA.93.032324},
   DOI={10.1103/physreva.93.032324},
   number={3},
   journal={Physical Review A},
   publisher={American Physical Society (APS)},
   author={Montanaro, Ashley and Pallister, Sam},
   year={2016},
   month=mar }

@article{Harrow_2009,
   title={Quantum Algorithm for Linear Systems of Equations},
   volume={103},
   ISSN={1079-7114},
   url={http://dx.doi.org/10.1103/PhysRevLett.103.150502},
   DOI={10.1103/physrevlett.103.150502},
   number={15},
   journal={Physical Review Letters},
   publisher={American Physical Society (APS)},
   author={Harrow, Aram W. and Hassidim, Avinatan and Lloyd, Seth},
   year={2009},
   month=oct }

@article{Costa_2019,
   title={Quantum algorithm for simulating the wave equation},
   volume={99},
   ISSN={2469-9934},
   url={http://dx.doi.org/10.1103/PhysRevA.99.012323},
   DOI={10.1103/physreva.99.012323},
   number={1},
   journal={Physical Review A},
   publisher={American Physical Society (APS)},
   author={Costa, Pedro C. S. and Jordan, Stephen and Ostrander, Aaron},
   year={2019},
   month=jan }

@article{Kyriienko_2021,
   title={Solving nonlinear differential equations with differentiable quantum circuits},
   volume={103},
   ISSN={2469-9934},
   url={http://dx.doi.org/10.1103/PhysRevA.103.052416},
   DOI={10.1103/physreva.103.052416},
   number={5},
   journal={Physical Review A},
   publisher={American Physical Society (APS)},
   author={Kyriienko, Oleksandr and Paine, Annie E. and Elfving, Vincent E.},
   year={2021},
   month=may }

@article{Childs_2021,
   title={High-precision quantum algorithms for partial differential equations},
   volume={5},
   ISSN={2521-327X},
   url={http://dx.doi.org/10.22331/q-2021-11-10-574},
   DOI={10.22331/q-2021-11-10-574},
   journal={Quantum},
   publisher={Verein zur Forderung des Open Access Publizierens in den Quantenwissenschaften},
   author={Childs, Andrew M. and Liu, Jin-Peng and Ostrander, Aaron},
   year={2021},
   month=nov, pages={574} }

@misc{lu2024quantumleaktimingsidechannel,
      title={Quantum Leak: Timing Side-Channel Attacks on Cloud-Based Quantum Services}, 
      author={Chao Lu and Esha Telang and Aydin Aysu and Kanad Basu},
      year={2024},
      eprint={2401.01521},
      archivePrefix={arXiv},
      primaryClass={cs.ET},
      url={https://arxiv.org/abs/2401.01521}, 
}

@misc{dong2025exploitingtimingsidechannelsquantum,
      title={Exploiting Timing Side-Channels in Quantum Circuits Simulation Via ML-Based Methods}, 
      author={Ben Dong and Hui Feng and Qian Wang},
      year={2025},
      eprint={2509.12535},
      archivePrefix={arXiv},
      primaryClass={cs.CR},
      url={https://arxiv.org/abs/2509.12535}, 
}

@article{Gheorghiu_2015,
   title={Robustness and device independence of verifiable blind quantum computing},
   volume={17},
   ISSN={1367-2630},
   url={http://dx.doi.org/10.1088/1367-2630/17/8/083040},
   DOI={10.1088/1367-2630/17/8/083040},
   number={8},
   journal={New Journal of Physics},
   publisher={IOP Publishing},
   author={Gheorghiu, Alexandru and Kashefi, Elham and Wallden, Petros},
   year={2015},
   month=aug, pages={083040} }

@inproceedings{Broadbent_2009,
   title={Universal Blind Quantum Computation},
   url={http://dx.doi.org/10.1109/FOCS.2009.36},
   DOI={10.1109/focs.2009.36},
   booktitle={2009 50th Annual IEEE Symposium on Foundations of Computer Science},
   publisher={IEEE},
   author={Broadbent, Anne and Fitzsimons, Joseph and Kashefi, Elham},
   year={2009},
   month=oct, pages={517–526} }

@misc{fitzsimons2016privatequantumcomputationintroduction,
      title={Private quantum computation: An introduction to blind quantum computing and related protocols}, 
      author={Joseph F. Fitzsimons},
      year={2016},
      eprint={1611.10107},
      archivePrefix={arXiv},
      primaryClass={quant-ph},
      url={https://arxiv.org/abs/1611.10107}, 
}

@article{Brearley_2024,
   title={Quantum algorithm for solving the advection equation using Hamiltonian simulation},
   volume={110},
   ISSN={2469-9934},
   url={http://dx.doi.org/10.1103/PhysRevA.110.012430},
   DOI={10.1103/physreva.110.012430},
   number={1},
   journal={Physical Review A},
   publisher={American Physical Society (APS)},
   author={Brearley, Peter and Laizet, Sylvain},
   year={2024},
   month=jul }

@inproceedings{tannu2019mitigating,
author = {Tannu, Swamit S. and Qureshi, Moinuddin K.},
title = {Not All Qubits Are Created Equal: A Case for Variability-Aware Policies for NISQ-Era Quantum Computers},
year = {2019},
isbn = {9781450362405},
publisher = {Association for Computing Machinery},
address = {New York, NY, USA},
url = {https://doi.org/10.1145/3297858.3304007},
doi = {10.1145/3297858.3304007},
booktitle = {Proceedings of the Twenty-Fourth International Conference on Architectural Support for Programming Languages and Operating Systems},
pages = {987–999},
numpages = {13},
location = {Providence, RI, USA},
series = {ASPLOS '19}
}

@misc{dasgupta2020characterizingstabilitynisqdevices,
      title={Characterizing the Stability of NISQ Devices}, 
      author={Samudra Dasgupta and Travis S. Humble},
      year={2020},
      eprint={2008.09612},
      archivePrefix={arXiv},
      primaryClass={quant-ph},
      url={https://arxiv.org/abs/2008.09612}, 
}

@inproceedings{das2019case,
  author    = {Das, Poulami and Tannu, Swamit S. and Nair, Prashant J. and Qureshi, Moinuddin},
  title     = {A Case for Multi-Programming Quantum Computers},
  booktitle = {Proceedings of the 52nd Annual {IEEE/ACM} International Symposium on Microarchitecture},
  series    = {MICRO-52},
  year      = {2019},
  pages     = {291--303},
  location  = {Columbus, OH, USA},
  doi       = {10.1145/3352460.3358287}
}

@inproceedings{saki2020analysis,
  author    = {Saki, Abdullah Ash and Alam, Mahabubul and Ghosh, Swaroop},
  title     = {Analysis of crosstalk in {NISQ} devices and security implications in multi-programming regime},
  booktitle = {Proceedings of the 39th International Conference on Computer-Aided Design}, 
  year      = {2020},
  month     = {8},
  pages     = {25--30},
  doi       = {10.1145/3370748.3406570}
}
\end{document}